\newtheorem{theorem}{Theorem}
 \let\mathscr\relax
\newtheorem{prop}{Proposition}
\DeclareMathOperator*{\argminA}{arg\,min} 
\DeclareMathOperator*{\argmaxA}{arg\,max} 
\newcommand{\onenorm}[1]{\left\lvert#1\right\rvert}
\title{\LARGE \bf
On Affine Policies for Wasserstein Distributionally Robust Unit Commitment}
\author{
Youngchae Cho and Insoon Yang 
\thanks{
This work was supported in part by the National Research Foundation of Korea funded by the MSIT(2020R1C1C1009766), and Samsung Electronics.
}
\thanks{
The authors are with the Department of Electrical and Computer Engineering, ASRI, SEPRI, Seoul National University, Seoul, South Korea
        {\tt\small \{youngchaecho,insoonyang\}@snu.ac.kr}}
}
\begin{document}

\maketitle
\thispagestyle{empty}
\pagestyle{empty}{
}
\begin{abstract}
This paper proposes a unit commitment (UC) model based on data-driven Wasserstein distributionally robust optimization (WDRO) for power systems under uncertainty of renewable generation as well as its tractable exact reformulation. The proposed model is formulated as a WDRO problem using an affine policy, which nests an infinite-dimensional optimization problem and satisfies the non-anticipativity constraint. To reduce conservativeness, we develop a novel technique that defines a subset of the uncertainty set with a probabilistic guarantee. 
Subsequently, the proposed model is recast as a semi-infinite programming problem that can be efficiently solved using existing algorithms. Notably, the scale of this reformulation is invariant with sample size. As a result, samples can be easily incorporated without using a sophisticated decomposition algorithm regardless of their number. Numerical simulations on 6- and 24-bus test systems demonstrate the economic and computational efficiency of the proposed model. 
\end{abstract}

\section{Introduction}
The unit commitment (UC) problem is a fundamental planning problem for power systems, which yields an optimal commitment and dispatch schedule of conventional generators for a demand forecast. However, the increasing use of uncertain renewable energy generation (REG) has posed challenges in solving it. To address the uncertainty, various stochastic optimization methods have been studied \cite{zheng2015stochastic}. 

The two most frequent stochastic optimization 
methods for the UC problem under the uncertainty of REG are stochastic programming (SP) and robust optimization (RO). The objective of SP is to minimize the expected total operating cost with respect to the true distribution of the uncertainty \cite{takriti1996stochastic}. However, as the true distribution is unknown and often approximated by an empirical distribution comprising a finite number of samples, the solution may cause unexpectedly high operating costs. On the other hand, RO minimizes the worst-case total operating cost over a set of uncertain scenarios \cite{cho2019box}. However, as probabilistic features of the uncertainty are ignored, solutions may be overly conservative. 

Recently, distributionally robust optimization (DRO) has attracted researchers in the field of power system planning and control \cite{guo2018data,pourahmadi2019distributionally,li2019distributionally,yang2019data}. The main goal of DRO is to minimize the worst-case expected total operating cost over a family of distributions of uncertainty, called an {\it ambiguity set}. The ambiguity set is built around a nominal or reference distribution (mostly an empirical distribution, as is assumed hereafter) so that it is large enough to include the true distribution while small enough to avoid being too conservative. Utilizing samples in a way that compensates for their incompleteness, DRO can provide a more reliable and less conservative solution than SP and RO, respectively. 

Among various types of ambiguity sets that have been studied in the literature of DRO (see \cite{rahimian2019distributionally} and the reference therein), this paper focuses on {\it Wasserstein balls} as they offer tractable formulations and out-of-sample performance guarantees \cite{esfahani2018data,gao2016distributionally,zhao2018data}. However, Wasserstein DRO is not straightforwardly applied to the UC problem. This is because any resulting UC model, when reformulated in a tractable form, may well have a scalability issue regarding the number of samples that make up the reference distribution. Indeed, most of the existing Wasserstein distributionally robust UC (WDRUC) models have this issue; see, e.g.,  \cite{zheng2020data} and \cite{gamboa2021decomposition}. 

To the best of the authors' knowledge, \cite{zhu2019wasserstein} is the only research paper to develop a WDRUC model that has a tractable reformulation with a scale independent of sample size. With this aim, an affine policy is used in \cite{zhu2019wasserstein}. However, the affine policy in \cite{zhu2019wasserstein} violates the non-anticipativity constraint. This constraint is such that, with it, decisions under uncertainty have to be made based on only the past and current observation of the uncertainty, which is reasonable. 
Without the constraint, it is unrealistically assumed in \cite{zhu2019wasserstein} that the future realization of REG can be observed. Thus, the model in \cite{zhu2019wasserstein} cannot be used in practice.

In this paper, a WDRUC model is proposed with its tractable exact reformulation. First, the proposed model is formulated as a WDRO problem that nests an infinite-dimensional worst-case expectation problem, relying on an affine policy. This affine policy builds on the reserve-modeling method studied for the robust UC (RUC) model in \cite{cho2019box} and thus meets the non-anticipativity constraint unlike the one in \cite{zhu2019wasserstein}. Subsequently, the proposed model is recast as a semi-infinite programming problem that can be efficiently solved using existing algorithms. The scale of this reformulation is independent of sample size. Thus, unlike in \cite{zheng2020data} and \cite{gamboa2021decomposition}, a large number of samples can be easily incorporated without using sophisticated decomposition algorithms. Meanwhile, the proposed model builds the Wasserstein ball on only a subset of the uncertainty set for more economical decision making. The subset is defined using a novel technique, which is computationally efficient and preserves the sample size independence of the solution method for the proposed model. Numerical simulation results demonstrate the economic and computational efficiency of the proposed model and the solution method, respectively.

In summary, the main contributions of this paper are as follows:
\begin{itemize}
\item A WDRUC model is proposed using an affine policy that satisfies the non-anticipativity constraint.
\item A tractable exact reformulation of the proposed model is provided, with a scale independent of sample size. 
\item A technique is developed to define a subset of the uncertainty set for the Wasserstein ball, preserving 
the solution method’s independence of sample size.
\end{itemize}

The rest of this paper is organized as follows: Section \ref{sec:preliminaries} explains a deterministic UC (DUC) model without considering any uncertainty and reviews the RUC model in \cite{cho2019box}. Section \ref{sec:exactWDRUC} formulates an ``exact" WDRUC model, which does not rely on an affine policy. Section \ref{sec:WDRUCwith} introduces the proposed model with affine policies and provides the tractable reformulation. Section \ref{sec:simulations} discusses the numerical simulation results. Finally, Section \ref{sec:conclusions} concludes this paper. 

\section{Preliminaries}\label{sec:preliminaries}
In this section, first, the power system of interest is modeled. Subsequently, the DUC model is formulated to introduce some decision variables and operational constraints of the general UC problem. After that, the RUC model in \cite{cho2019box} is explained, on which the exact WDRUC model is based. 

\subsection{Power System Modeling}
Throughout this paper, the UC problem for a power system with $I$ buses over a discrete planning horizon of $T$ time periods is considered, whose index sets are denoted by $\mathcal I$ and ${\mathcal T}$, respectively. Each bus has a generator, a load, and an REG system, all with the same index. The load at each bus in each time period is assumed to be initially known, while the REG is uncertain. It is also assumed that the load shedding and REG curtailment are possible. To model the transmission network, a DC power flow representation is used. Let $\mathcal L$ denote the index set of the transmission lines. 

The uncertainty of REG is addressed in the form of that of forecast error. Let $w^{\rm f}_{it}\in{\mathcal R}$ denote the REG forecast at bus $i$ in time period $t$, where $\mathcal R$ represents the real number set. Let also $w_{it}\in{\mathcal R}$ and ${\bf w}\in{\mathcal R}^{IT}$ denote the uncertain forecast error at bus $i$ in time period $t$ and the vector of $w_{it}$ for all $i\in{\mathcal I}$ and $t\in{\mathcal T}$, respectively. For each $i\in{\mathcal I}$ and $t\in{\mathcal T}$, $w_{it}$ takes a value in $\left[\underline{w}_{it},\overline{w}_{it}\right]\subset{\mathcal R}$ where $\underline{w}_{it}:=-w^{\rm f}_{it}$ and $\overline{w}_{it}:=W_i - w^{\rm f}_{it}$ with $W_i$ denoting the capacity of REG system $i$. Accordingly, the uncertainty set of $\bf w$ is defined as ${\mathcal W} := \left[\underline{\bf w},\overline{\bf w}\right]\subset{\mathcal R}^{IT}$, where $\underline{\bf w}$ and $\overline{\bf w}$ are the vectors of $\underline{w}_{it}$ and $\overline{w}_{it}$ for all $i\in{\mathcal I}$ and $t\in{\mathcal T}$, respectively. 
For each $t\in{\mathcal T}$, let ${\bf w}_t\in{\mathcal R}^{I}$ denote the vector of $w_{it}$ for all $i\in{\mathcal I}$.

\subsection{DUC}
Assuming that $w_{it}$ is known for each $i\in{\mathcal I}$ and $t\in{\mathcal T}$, the DUC model is formulated as follows: 
\begin{subequations}\label{eq:DUC}
\begin{align}
\begin{split}
&\min_{u^{\rm o}_{it},u^{\rm u}_{it},u^{\rm d}_{it},{x}^{\rm g}_{it},{x}^{\rm l}_{it},x^{\rm r}_{it}}
\sum_{i\in{\mathcal I}}\sum_{t\in{\mathcal T}}
\big(
C^{\rm o}_{i}u^{\rm o}_{it}+C^{\rm u}_{i}u^{\rm u}_{it}\\
&\quad +C^{\rm d}_{i}u^{\rm d}_{it}+C^{\rm g}_ix^{\rm g}_{it} + C^{\rm l}_{it} x^{\rm l}_{it} + C^{\rm r}_{it}x^{\rm r}_{it}\big)
\end{split}\label{eq:DUCobj}\\
&\text{s.t. \quad}u^{\rm o}_{it},u^{\rm u}_{it},u^{\rm d}_{it}\in\left\{0,1\right\},\quad
\forall i\in{\mathcal I},\forall t\in{\mathcal T},\label{eq:DUCcon1}\\
&
u^{\rm u}_{it}\geq u^{\rm o}_{it} - u^{\rm o}_{i(t-1)},\quad
\forall i\in{\mathcal I},\forall t\in{\mathcal T},\label{eq:DUCcon2}\\
&u^{\rm d}_{it}\geq -u^{\rm o}_{it} + u^{\rm o}_{i(t-1)},\quad
\forall i\in{\mathcal I},\forall t\in{\mathcal T},\label{eq:DUCcon3}\\
&u^{\rm o}_{it} + u^{\rm d}_{it} \leq 1,\quad
\forall i\in{\mathcal I},\forall t\in{\mathcal T},\label{eq:DUCcon4}\\
&u^{\rm o}_{i(t-1)} + u^{\rm u}_{it} \leq 1,\quad
\forall i\in{\mathcal I},\forall t\in{\mathcal T},\label{eq:DUCcon5}\\
\begin{split}
&u^{\rm o}_{it} - u^{\rm o}_{i(t-1)} \leq u^{\rm o}_{i\tau},\\
&\quad t \leq \tau \leq \min \left\{ t-1 + T^{\rm u}_i , T \right\}, \quad
\forall i\in{\mathcal I},\forall t\in{\mathcal T},
\end{split}\label{eq:DUCcon6}\\
\begin{split}
&u^{\rm o}_{i(t-1)} - u^{\rm o}_{it} \leq 1- u^{\rm o}_{i\tau}, \\
&\quad t \leq \tau \leq \min \left\{ t-1 + T^{\rm d}_i , T \right\},\quad
\forall i\in{\mathcal I},\forall t\in{\mathcal T},
\end{split}\label{eq:DUCcon7}\\
&x^{\rm g}_{it},x^{\rm r}_{it},x^{\rm l}_{it}\in{\mathcal R},
\quad\forall i\in{\mathcal I},\forall t\in{\mathcal T},\label{eq:DUCcon8}\\
&\underline{X}_i u^{\rm o}_{it} \leq {x}^{\rm g}_{it} \leq \overline{X}_iu^{\rm o}_{it},\quad
\forall i\in{\mathcal I},\forall t\in{\mathcal T},\label{eq:DUCcon9}\\
\begin{split}
&-X^{\rm rd}_i u^{\rm o}_{it} - X^{\rm sd}_i u^{\rm d}_{it}
\leq x^{\rm g}_{it} - x^{\rm g}_{i\left(t-1\right)} \\
&\quad \leq X^{\rm ru}_{i} u^{\rm o}_{i(t-1)} + X^{\rm su}_i u^{\rm u}_{it},\quad\forall i\in{\mathcal I},\forall t\in{\mathcal T},
\end{split}\label{eq:DUCcon10}\\
&0\leq x^{\rm l}_{it} \leq d_{it},\quad\forall i\in{\mathcal I},\forall t\in{\mathcal T},\label{eq:DUCcon11}\\
&0\leq x^{\rm r}_{it} \leq w^{\rm f}_{it} + {w}_{it},\quad\forall i\in{\mathcal I},\forall t\in{\mathcal T},\label{eq:DUCcon12}\\
\begin{split}
&-{F}_l\leq \sum_{i\in{\mathcal I}}F_{il} \left(x^{\rm g}_{it} - d_{it} + x^{\rm l}_{it} + w^{\rm f}_{it} + {w}_{it} - x^{\rm r}_{it}\right)\\
&\quad\leq{F}_{l},\quad\forall l\in{\mathcal L},\forall t\in{\mathcal T},
\end{split}\label{eq:DUCcon13}\\
&\sum_{i\in{\mathcal I}} \left(x^{\rm g}_{it} - d_{it} + x^{\rm l}_{it} + w^{\rm f}_{it} + {w}_{it} - x^{\rm r}_{it}\right)=0,\forall t\in{\mathcal T}.\label{eq:DUCcon14}
\end{align}
\end{subequations}

In (\ref{eq:DUC}), binary variables $u^{\rm o}_{it}$, $u^{\rm u}_{it}$, and $u^{\rm d}_{it}$ denote the on/off, start-up, and shut-down status of generator $i$ in time period $t$, respectively. Specifically, $u^{\rm o}_{it}$, $u^{\rm u}_{it}$, and $u^{\rm d}_{it}$ being equal to 1 indicate that generator $i$ is turned on, starts up, and shuts down in time period $t$, respectively. Real variables $x^{\rm g}_{it}$, $x^{\rm l}_{it}$, and $x^{\rm r}_{it}$ denote the conventional generation, load shedding, and REG curtailment at bus $i$ in time period $t$, respectively.

In (\ref{eq:DUCobj}), $C^{\rm o}_{i}$, $C^{\rm u}_{i}$, $C^{\rm d}_{i}$, and $C^{\rm g}_i$ denote the no-load, start-up, shut-down, and marginal costs of generator $i$, respectively; $C^{\rm l}_{it}$, and $C^{\rm r}_{it}$ represent the marginal costs of load shedding and REG curtailment at bus $i$ in time period $t$, respectively. 

Constraints (\ref{eq:DUCcon2})--(\ref{eq:DUCcon5}) are logical constraints for the on/off, start-up, and shut-down status of the generators; (\ref{eq:DUCcon6}) and (\ref{eq:DUCcon7}) are the minimum up and down time constraints for the generators, where $T^{\rm u}_i$ and $T^{\rm d}_i$ denote the minimum up and down times of generator $i$, respectively; (\ref{eq:DUCcon9}) imposes the upper and lower limits on the conventional generation at each time period, where $\underline{X}_i$ and $\overline{X}_i$ denote the minimum and maximum possible power outputs of generator $i$, respectively; (\ref{eq:DUCcon10}) is the ramping constraint of the generators, where $X^{\rm rd}_i$, $X^{\rm sd}_i$, $X^{\rm ru}_i$, and $X^{\rm su}_i$ denote the ramp-down, shut-down-ramp, ramp-up, and start-up-ramp limits of generator $i$, respectively; (\ref{eq:DUCcon11}) and (\ref{eq:DUCcon12}) put the upper limits on the load shedding and REG curtailment, respectively, where $d_{it}$ is the load at bus $i$ in time period $t$. The inequalities in (\ref{eq:DUCcon13}) are the transmission line capacity constraint, where $F_l$ and $F_{il}$ denote the maximum possible power flow in line $l$ and the power shift factor of bus $i$ and line $l$, respectively. Constraint (\ref{eq:DUCcon14}) represents the supply--demand balance condition. 

Let (\ref{eq:DUC}) be compactly rewritten as follows: 
\[
\min_{{\bf u}\in{\mathcal U},{\bf x}} {\bf c}_1^\top{\bf u} + {\bf c}^\top_2{\bf x}
\text{\quad s.t.\quad (\ref{eq:DUCcon8})--(\ref{eq:DUCcon14})},
\]
where ${\bf u}$ and $\bf x$ are the vectors of all the binary and real decision variables, respectively; the cost coefficient vectors ${\bf c}_1$ and ${\bf c}_2$ are defined accordingly. The feasible set of $\bf u$ is defined as ${\mathcal U}:=\left\{{\bf u}: \text{(\ref{eq:DUCcon1})--(\ref{eq:DUCcon7})}\right\}$. 

Problem (\ref{eq:DUC}) is a standard mixed-integer linear programming (MILP) problem that can be solved by many off-the-shelf solvers \cite{linderoth2010milp}. The goal of (\ref{eq:DUC}) is to minimize the total operating cost, i.e., the sum of the fixed operating cost ${\bf c}_1^\top{\bf u}$ and variable operating cost ${\bf c}^\top_2{\bf x}$, while satisfying all the operational constraints (\ref{eq:DUCcon1})--(\ref{eq:DUCcon14}). However, as ${\bf w}$ is uncertain, its solution may not guarantee that the constraints are actually met. The following subsection explains how the RUC model in \cite{cho2019box} deals with the uncertainty of $\bf w$. 

\subsection{RUC}
To address the uncertainty of $\bf w$ in (\ref{eq:DUC}), the RUC model in \cite{cho2019box} adopts the multi-stage decision-making framework assuming one day-ahead (DA) stage and $T$ real-time (RT) stages, the latter of which correspond to the $T$ time periods on the planning horizon. At the DA stage, the allowable dispatch range of each generator for each time period is determined simultaneously with the commitment status (by solving (\ref{eq:RUC}) below). This range is defined so that all the values in it are implementable regardless of the power outputs of that generator in the other time periods, by which the non-anticipativity constraint can be naturally met. 
Subsequently, at each RT stage, the actual REG forecast error is disclosed and the conventional generation, load shedding, and REG curtailment are optimized (by solving the associated subproblem in (\ref{eq:f}) below). In terms of the objective function, the RUC model minimizes the worst-case total operating cost over the uncertainty set of REG forecast error. 

Let $\overline{x}^{\rm g}_{it}$ and $\underline{x}^{\rm g}_{it}$ denote the maximum and minimum allowable power outputs of generator $i$ in time period $t$, respectively, which form the allowable dispatch range. Let also $\overline{\bf x}^{\rm g}$ and $\underline{\bf x}^{\rm g}$ denote the vectors of $\overline{x}^{\rm g}_{it}$ and $\underline{x}^{\rm g}_{it}$ for all $i\in{\mathcal I}$ and $t\in{\mathcal T}$, respectively. The RUC model is written as 
\begin{equation}\label{eq:RUC}
\min_{{\bf u}\in{\mathcal U},\left(\overline{\bf x}^{\rm g},\underline{\bf x}^{\rm g}\right)\in{\mathcal X}^{\rm r}\left({\bf u}\right)} 
\left\{
{\bf c}_1^\top{\bf u} + 
\max_{{\bf w}\in\mathcal W}
f\left(\overline{\bf x}^{\rm g},\underline{\bf x}^{\rm g},{\bf w}\right)
\right\}
\end{equation}
where
\begin{equation}\label{eq:f} 
f\left(\overline{\bf x}^{\rm g},\underline{\bf x}^{\rm g},{\bf w}\right) := 
\min_{{\bf x}\in{\mathcal X}\left(\overline{\bf x}^{\rm g},\underline{\bf x}^{\rm g},{\bf w}\right)} 
{\bf c}^\top_2{\bf x}.
\end{equation}
In (\ref{eq:RUC}), ${\mathcal X}^{\rm r}\left({\bf u}\right)$ is defined as follows: 
\[
\begin{aligned}
&{\mathcal X}^{\rm r}\left({\bf u}\right):=\{
\left(\overline{\bf x}^{\rm g},\underline{\bf x}^{\rm g}\right):\\
&\quad\underline{X}_i u^{\rm o}_{it} \leq\underline{x}^{\rm g}_{it}\leq \overline{x}^{\rm g}_{it} \leq \overline{X}_iu^{\rm o}_{it},\quad
\forall i\in{\mathcal I},\forall t\in{\mathcal T},\\
&\quad\overline{x}^{\rm g}_{it} - \underline{x}^{\rm g}_{i\left(t-1\right)} \leq X^{\rm ru}_{i} u^{\rm o}_{i(t-1)} + X^{\rm su}_i u^{\rm u}_{it},\quad\forall i\in{\mathcal I},\forall t\in{\mathcal T},\\
&\quad\overline{x}^{\rm g}_{i\left(t-1\right)} - \underline{x}^{\rm g}_{it}  \leq X^{\rm rd}_i u^{\rm o}_{it} + X^{\rm sd}_i u^{\rm d}_{it},\quad\forall i\in{\mathcal I},\forall t\in{\mathcal T}\big\}.
\end{aligned}
\]
In (\ref{eq:f}), ${\mathcal X}\left(\overline{\bf x}^{\rm g},\underline{\bf x}^{\rm g},{\bf w}\right)$ is defined as follows: 
\[
\begin{aligned}
&{\mathcal X}\left(\overline{\bf x}^{\rm g},\underline{\bf x}^{\rm g},{\bf w}\right):=\{
{\bf x}: \text{(\ref{eq:DUCcon8}), (\ref{eq:DUCcon11})--(\ref{eq:DUCcon14}),}\\
&\quad\underline{x}^{\rm g}_{it} \leq {x}^{\rm g}_{it} \leq \overline{x}^{\rm g}_{it},\quad
\forall i\in{\mathcal I},\forall t\in{\mathcal T}\},
\end{aligned}
\]
which is without any dynamic constraint. This implies that (\ref{eq:f}) can be decomposed into $T$ subproblems each of which is a single-period dispatch problem parameterized with the REG forecast error in only the corresponding time period. 

The operational constraints (\ref{eq:DUCcon9}) and (\ref{eq:DUCcon10}) of generators integrated in the DUC model (\ref{eq:DUC}) are considered in defining ${\mathcal X}^{\rm r}\left({\bf u}\right)$ so that solutions to (\ref{eq:f}) satisfy them. Thus, the feasibility of (\ref{eq:RUC}) ensures that all the operational constraints (\ref{eq:DUCcon1})--(\ref{eq:DUCcon14}) in (\ref{eq:DUC}) can be met adaptively and non-anticipatively.

Problem (\ref{eq:RUC}) can be solved by the column-and-constraint-generation (C\&CG) algorithm \cite{zeng2013solving}, while (\ref{eq:f}) is a standard linear programming (LP) problem. However, solutions to (\ref{eq:RUC}) might be overly conservative as it does not exploit statistical attributes of the REG forecast error. The exact WDRUC model explained in the following section makes more economical decisions by means of Wasserstein DRO. 

\section{Exact WDRUC}\label{sec:exactWDRUC}
In this section, the exact WDRUC model is described. Although ignored in the RUC model, $\bf w$ as a random vector has its associated true distribution, which is unknown but partially observable through samples. To utilize this probabilistic information while hedging against its incompleteness, the exact WDRUC model builds an empirical distribution and a Wasserstein ball around it, over which the worst-case expected total operating cost is minimized. 
Meanwhile, the model builds the Wasserstein ball on only a subset of the uncertainty set $\mathcal W$ to reduce conservativeness, simultaneously ensuring the operational feasibility for all ${\bf w}\in{\mathcal W}$.
The subset is defined using a novel technique, which has computational advantages compared to existing alternatives. Additional details are described later in this section.

Assuming that $S$ samples $\hat{\bf w}^{1},\ldots,\hat{\bf w}^{S}$ of ${\bf w}$ are given, the exact WDRUC model is written as follows:
\begin{subequations}\label{eq:DRUC}
\begin{align}
&\min_{\substack{{\bf u}\in{\mathcal U},\\
\left(\overline{\bf x}^{\rm g},\underline{\bf x}^{\rm g}\right)\in{\mathcal X}^{\rm r}\left({\bf u}\right)}}
\left\{{\bf c}^\top_1{\bf u}+\max_{{\mathbb P}\in{\mathcal P}_{\varepsilon}\left({\mathbb P}_{\rm e},{\Omega}\right)}
{\text E}_{\mathbb P}\left[f\left(\overline{\bf x}^{\rm g},\underline{\bf x}^{\rm g},{\bf w}\right)\right]
\right\}\label{eq:DRUCobj}\\
&\qquad\ \text{s.t.}\qquad \ {\mathcal X}\left(\overline{\bf x}^{\rm g},\underline{\bf x}^{\rm g},{\bf w}\right)\neq\emptyset,\quad\forall {\bf w}\in{\mathcal W}.\label{eq:DRUCcon}
\end{align}
\end{subequations}
In (\ref{eq:DRUCobj}), ${\mathbb P}_{\rm e}:=(1/S)\sum_{s\in{\mathcal S}}\delta_{\hat{\bf w}^{s}}
$ denotes the empirical distribution of ${\bf w}$,
where $\mathcal S$ and $\delta_{\bf w}$ represent the index set of the samples and the Dirac distribution with the unit mass at ${\bf w}$, respectively. 
Furthermore, $\Omega:={\mathcal W}\cap\left[\underline{\bf w}^{\rm a},\overline{\bf w}^{\rm a}\right]$ is a subset of $\mathcal W$, 
where $\underline{\bf w}^{\rm a}$ and $\overline{\bf w}^{\rm a}$ denote the vectors of $\underline{w}^{\rm a}_{it}$ and $\overline{w}^{\rm a}_{it}$ for all $i\in{\mathcal I}$ and $t\in{\mathcal T}$, respectively. The entries are defined as $\overline{w}^{\rm a}_{it} := \max_{s\in{\mathcal S}} \hat{w}^{s}_{it} + \varepsilon \max\left\{S,{\beta}\right\}$ and $\underline{w}^{\rm a}_{it} := \min_{s\in{\mathcal S}} \hat{w}^{s}_{it} - \varepsilon\max\left\{S,\beta\right\}$ where $\hat{w}^{s}_{it}$ and $\beta$ are the entry of $\hat{\bf w}^{s}$ corresponding to that $w_{it}$ of ${\bf w}$ and a user-defined parameter, respectively. Moreover, $\mathcal P_\varepsilon\left({\mathbb P}_{\rm e},{\Omega}\right)$ denotes the Wasserstein ball of radius $\varepsilon$ centered at ${\mathbb P}_{\rm e}$ on $\Omega$, 
i.e., ${\mathcal P}_{\varepsilon}\left({\mathbb P}_{\rm e},{\Omega}\right):= \left\{{\mathbb P}\in{\mathcal P}\left(\Omega\right): d_{\rm w}\left({\mathbb P},{\mathbb P}_{\rm e}\right)\leq\varepsilon\right\}$,
where ${\mathcal P}\left({\Omega}\right)$ and $d_{\rm w}\left({\mathbb P},{\mathbb P}_{\rm e}\right)$ denote 
the family of all probability distributions supported on ${\Omega}$ and the Wasserstein distance from ${\mathbb P}$ to ${\mathbb P}_{\rm e}$, respectively. By convention, $d_{\rm w}$ is defined as the 1-Wasserstein metric with the 1-norm, i.e.,
$d_{\rm w}\left({\mathbb P},{\mathbb P}^\prime\right):=\inf_{\pi\in\Pi\left({\mathbb P},{\mathbb P}^\prime\right)}\int_{{\mathcal W}\times{\mathcal W}}\onenorm{{\bf w} - {\bf w}^\prime}\pi\left(d{\bf w},d{\bf w}^\prime\right)$, where $\Pi\left({\mathbb P},{\mathbb P}^\prime\right)$ is the set of all joint distributions for ${\bf w}$ and ${\bf w}^\prime$ with marginals ${\mathbb P}$ and ${\mathbb P}^\prime$, respectively, and $\onenorm{\,\cdot\,}$ represents the 1-norm. The symbol ${\text E}_{\mathbb P}$ denotes the expectation operator with respect to a given distribution $\mathbb P$ of ${\bf w}$. Constraint (\ref{eq:DRUCcon}) ensures the operational feasibility for all ${\bf w}\in{\mathcal W}$. 

The exact WDRUC model (\ref{eq:DRUC}) uses the Wasserstein ball $\mathcal P_\varepsilon\left({\mathbb P}_{\rm e},{\Omega}\right)$ on $\Omega$, not $\mathcal P_\varepsilon\left({\mathbb P}_{\rm e},{\mathcal W}\right)$ on $\mathcal W$, as the latter may be conservative once the operational feasibility is ensured for all ${\bf w}\in{\mathcal W}$. Meanwhile, $\beta$ affects the worst-case confidence level of $\Omega$. The following theorem states the relationship:
\begin{theorem}\label{theorem1}
The worst-case probability of the realization of $\bf w$ being outside $\Omega$ over $\mathcal P_\varepsilon\left({\mathbb P}_{\rm e},{\mathcal W}\right)$ is bounded above by the reciprocal of $\max\left\{\beta,S\right\}$, i.e.,
$\sup_{{\mathbb P}\in{\mathcal P}_{\varepsilon}\left({\mathbb P}_{\rm e},{\mathcal W}\right)}{\mathbb P}\left[{\bf w}\notin\Omega\right]\leq 1/\max\left\{\beta,S\right\}$.
\end{theorem}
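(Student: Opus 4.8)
The plan is to exploit the defining feature of $\Omega$: its box $\left[\underline{\bf w}^{\rm a},\overline{\bf w}^{\rm a}\right]$ is the coordinatewise sample range inflated by $\varepsilon\max\{S,\beta\}$, so that any point outside $\Omega$ is $\ell_1$-far from \emph{every} sample simultaneously. This converts the claimed probability bound into a Markov-type inequality on the optimal transport plan. As a first reduction, since every admissible $\mathbb P\in{\mathcal P}_{\varepsilon}\left({\mathbb P}_{\rm e},{\mathcal W}\right)$ is supported on $\mathcal W$ and $\Omega={\mathcal W}\cap\left[\underline{\bf w}^{\rm a},\overline{\bf w}^{\rm a}\right]$, the event $\{{\bf w}\notin\Omega\}$ coincides $\mathbb P$-almost surely with $\{{\bf w}\notin\left[\underline{\bf w}^{\rm a},\overline{\bf w}^{\rm a}\right]\}$, so it suffices to bound the probability of leaving the inflated box.

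The key geometric step is to show that if ${\bf w}\notin\left[\underline{\bf w}^{\rm a},\overline{\bf w}^{\rm a}\right]$, then $\onenorm{{\bf w}-\hat{\bf w}^{s}}>\varepsilon\max\{S,\beta\}$ for \emph{every} $s\in{\mathcal S}$. Leaving the box means some coordinate $(i,t)$ satisfies $w_{it}>\overline{w}^{\rm a}_{it}=\max_{s}\hat{w}^{s}_{it}+\varepsilon\max\{S,\beta\}$ or $w_{it}<\underline{w}^{\rm a}_{it}=\min_{s}\hat{w}^{s}_{it}-\varepsilon\max\{S,\beta\}$. In the first case, $w_{it}-\hat{w}^{s}_{it}\geq w_{it}-\max_{s}\hat{w}^{s}_{it}>\varepsilon\max\{S,\beta\}$ holds for all $s$ because each $\hat{w}^{s}_{it}$ is dominated by the coordinatewise maximum, and the second case is symmetric. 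Since the $1$-norm dominates the magnitude of any single coordinate, the bound follows uniformly in $s$.

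With this in hand I would lower-bound the Wasserstein distance by a routine transport argument. Fix any coupling $\pi\in\Pi\left(\mathbb P,{\mathbb P}_{\rm e}\right)$; its second marginal ${\mathbb P}_{\rm e}$ is supported on the samples, so on the event $\{{\bf w}\notin\Omega\}$ the integrand $\onenorm{{\bf w}-{\bf w}'}$ exceeds $\varepsilon\max\{S,\beta\}$ no matter which sample ${\bf w}'$ equals. Restricting the transport integral to this event gives
\[
\int_{{\mathcal W}\times{\mathcal W}}\onenorm{{\bf w}-{\bf w}'}\,\pi\left(d{\bf w},d{\bf w}'\right)\geq\varepsilon\max\{S,\beta\}\cdot\mathbb P\left[{\bf w}\notin\Omega\right],
\]
and taking the infimum over $\pi$ yields $d_{\rm w}\left(\mathbb P,{\mathbb P}_{\rm e}\right)\geq\varepsilon\max\{S,\beta\}\cdot\mathbb P\left[{\bf w}\notin\Omega\right]$. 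Since $d_{\rm w}\left(\mathbb P,{\mathbb P}_{\rm e}\right)\leq\varepsilon$ on the ball, dividing by $\varepsilon\max\{S,\beta\}>0$ gives $\mathbb P\left[{\bf w}\notin\Omega\right]\leq 1/\max\{S,\beta\}$, and taking the supremum over $\mathbb P$ proves the theorem.

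The only delicate point is the uniformity in $s$ in the geometric step: the distance bound must hold against \emph{every} sample at once, which is precisely what the coordinatewise $\max_{s}$ and $\min_{s}$ in the definition of $\Omega$ secure. Once that is established, the transport lower bound is an elementary Markov inequality, and the factor $\max\{S,\beta\}$ in the inflation radius transfers directly into the reciprocal confidence bound.
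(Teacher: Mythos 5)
Your proof is correct, but it takes a genuinely different and more elementary route than the paper. The paper's proof introduces an enlarged compact convex support ${\mathcal W}^{\rm u}$, invokes Theorem 4.4 and Corollary 5.3 of the Mohajerin Esfahani--Kuhn framework to recast the worst-case escape probability over ${\mathcal P}_\varepsilon\left({\mathbb P}_{\rm e},{\mathcal W}^{\rm u}\right)$ as the finite-dimensional problem (\ref{eq:theorembeta}), explicitly identifies its optimizer (a data point nearest the boundary of $\left[\underline{\bf w}^{\rm a},\overline{\bf w}^{\rm a}\right]$ transported along the shortest path) to conclude the value is exactly $1/\max\left\{S,\beta\right\}$, and then chains inclusions to pass back to ${\mathcal P}_\varepsilon\left({\mathbb P}_{\rm e},{\mathcal W}\right)$ and $\Omega$. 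You instead work entirely on the primal side: the coordinatewise inflation by $\varepsilon\max\left\{S,\beta\right\}$ puts every point outside the box at $\ell_1$-distance greater than $\varepsilon\max\left\{S,\beta\right\}$ from \emph{every} sample, so restricting any coupling's transport cost to the escape event yields $d_{\rm w}\left({\mathbb P},{\mathbb P}_{\rm e}\right)\geq\varepsilon\max\left\{S,\beta\right\}\,{\mathbb P}\left[{\bf w}\notin\Omega\right]$, and the ball constraint finishes the argument. Your version is self-contained, needs no duality theorem, no auxiliary support set, and no verification of a claimed optimizer, and it generalizes immediately to other norms and supports; what it gives up is the extra information the paper's computation provides, namely that $1/\max\left\{S,\beta\right\}$ is actually \emph{attained} over the enlarged ball (tightness), which the theorem statement does not require. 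The only point to flag is that your final division presumes $\varepsilon>0$; for $\varepsilon=0$ the ball degenerates to $\left\{{\mathbb P}_{\rm e}\right\}$ and the bound holds trivially, so you should add that one sentence for completeness.
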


\begin{proof}
Assume for the proof that $\bf w$ has a larger uncertainty set ${\mathcal W}^{\rm u}\subset{\mathcal R}^{IT}$, which is an arbitrary compact convex set with an interior containing ${\mathcal W}\cup\left[\underline{\bf w}^{\rm a},\overline{\bf w}^{\rm a}\right]$. According to Theorem 4.4 and Corollary 5.3 in \cite{esfahani2018data}, $
V := \sup_{{\mathbb P}\in{\mathcal P}_\varepsilon\left({\mathbb P}_{\rm e},{\mathcal W}^{\rm u}\right)}{\mathbb P}\left[{\bf w}\notin\left(\underline{\bf w}^{\rm a},\overline{\bf w}^{\rm a}\right)\right]
$
is equal to the optimal value of the following problem: 
\begin{equation}\label{eq:theorembeta}
\begin{aligned}
&\sup_{\alpha^{s}_k\geq0,{\bf q}^{s}_k}
&&
\frac{1}{S}\sum_{s\in{\mathcal S}}\sum_{k\in{\mathcal K}}\alpha^{s}_{k}l_{k}\left(\hat{\bf w}^s + {\bf q}^{s}_k/{\alpha^s_k}\right)\\
&\text{s.t.}&&\frac{1}{S}\sum_{s\in{\mathcal S}}\sum_{k\in{\mathcal K}}\onenorm{{\bf q}^s_k}\leq \varepsilon,\\
&&&\sum_{k\in{\mathcal K}}\alpha^s_{k} = 1,\quad\forall s\in{\mathcal S},\\
&&&\hat{\bf w}^s + 
{\bf q}^{s}/{\alpha^s_k} \in {\mathcal W}^{\rm u},\quad\forall k\in{\mathcal K},\forall s\in{\mathcal S}
\end{aligned}
\end{equation}
where ${\mathcal K}:=\left\{1,2,\ldots,2IT+1\right\}$, 
\[
\begin{aligned}
&l_k\left({\bf w}\right):=
\begin{cases}
\begin{aligned}
&1 && \text{if}\quad {\bf e}_k^\top{\bf w} \geq {\bf e}_k^\top\overline{\bf w}^{\rm a}\\
&-\infty && \text{otherwise}
\end{aligned}
\end{cases}, 1 \leq k \leq IT,\\
&l_{IT+k}\left({\bf w}\right):=
\begin{cases}
\begin{aligned}
&1 && \text{if}\quad {\bf e}_{k}^\top{\bf w} \leq {\bf e}_{k}^\top\underline{\bf w}^{\rm a} \\
&-\infty && \text{otherwise}
\end{aligned}
\end{cases}, 1 \leq k \leq IT,
\end{aligned}
\]
and $l_{2IT+1}\left({\bf w}\right):=0$ with ${\bf e}_k$ denoting the $k$th column of the $IT$-dimensional identity matrix. In (\ref{eq:theorembeta}), the conventional extended arithmetics apply, e.g., $1/0 = \infty$, $0/0 = 0$, and $0\cdot\infty = 0$. For each $\left(i,k,s,t\right)\in{\mathcal I}\times{\mathcal K}\times{\mathcal S}\times{\mathcal T}$, let $q^{s}_{kit}$ denote the entry of ${\bf q}^{s}_k$ corresponding to that $w_{it}$ of ${\bf w}$. The optimal value of (\ref{eq:theorembeta}) is obtained if either ${q}^s_{kit}=\varepsilon S$ for any $\left(i,k,s,t\right)\in{\mathcal I}\times\left\{1,\ldots,IT\right\}\times{\mathcal S}\times{\mathcal T}$ such that $s=\argmaxA_{s^\prime\in{\mathcal S}} \hat{w}^{s^\prime}_{it}$ and ${\bf e}_{k}^\top\hat{\bf w}^s = \hat{w}^s_{it}$, or ${q}^s_{kit}=-\varepsilon S$ for any $\left(i,k,s,t\right)\in{\mathcal I}\times\left\{IT+1,\ldots,2IT\right\}\times{\mathcal S}\times{\mathcal T}$ such that $s=\argminA_{s^\prime\in{\mathcal S}} \hat{w}^{s^\prime}_{it}$ and ${\bf e}_{k-IT}^\top\hat{\bf w}^{s} = \hat{w}^{s}_{it}$, with, in either case, $\alpha^{s}_k$ equal to $1$, if $S\geq\beta$, and $S/\beta$, otherwise. These cases are when a data point originally closest to the boundary of $\left[\underline{\bf w}^{\rm a},\overline{\bf w}^{\rm a}\right]$ moves along the shortest path to reach it, i.e., by $\varepsilon\max\left\{S,\beta\right\}$. That is to say, $V= 1/\max\left\{S,\beta\right\}$. Furthermore, it is observed that
\[
\begin{aligned}
V 
&\geq \sup_{{\mathbb P}\in{\mathcal P}_\varepsilon\left({\mathbb P}_{\rm e},{\mathcal W}^{\rm u}\right)}{\mathbb P}\left[{\bf w}\notin
\left[\underline{\bf w}^{\rm a},\overline{\bf w}^{\rm a}\right]\right]\\
&\geq \sup_{{\mathbb P}\in{\mathcal P}_\varepsilon\left({\mathbb P}_{\rm e},{\mathcal W}^{\rm u}\right)\cap
\left\{{\mathbb P}^\prime\in{\mathcal P}\left({\mathcal W}^{\rm u}\right): {\mathbb P}^\prime\left({\bf w}\in{\mathcal W}\right)=1\right\}
}{\mathbb P}\left[{\bf w}\notin\left[\underline{\bf w}^{\rm a},\overline{\bf w}^{\rm a}\right]\right]\\
&= \sup_{{\mathbb P}\in{\mathcal P}_\varepsilon\left({\mathbb P}_{\rm e},{\mathcal W}^{\rm u}\right)\cap\left\{{\mathbb P}^\prime\in{\mathcal P}\left({\mathcal W}^{\rm u}\right): {\mathbb P}^\prime\left({\bf w}\in{\mathcal W}\right)=1\right\}}{\mathbb P}\left[{\bf w}\notin{\Omega}\right]\\
&= \sup_{{\mathbb P}\in{\mathcal P}_\varepsilon\left({\mathbb P}_{\rm e},{\mathcal W}\right)}{\mathbb P}\left[{\bf w}\notin{\Omega}\right].
\end{aligned}
\]
Hence the statement holds. 
\end{proof}

Theorem \ref{theorem1} implies that $\left(1-1/\beta\right)$ can be interpreted as the worst-case confidence level of $\Omega$ for a given $\varepsilon$ when $\beta\geq S$. Thus, $\beta$ can be set based on the system operator's preference, e.g., to $1/0.05$ and $1/0.01$, similar to the confidence level of a chance-constrained optimization problem. 

The above-described way of defining the subset $\Omega$ over which the Wasserstein ball is built has not been studied. Although existing approaches, mostly as an approximation to Wasserstein distributionally robust chance constraints, e.g., \cite{duan2018distributionally} and \cite{poolla2020wasserstein}, might be used, the new technique is adopted as it is computationally less costly. For example, the method in \cite{duan2018distributionally} would define the subset as a convex hull of some uncertain scenarios, which is hard to represent as a system of constraints. Further, the resulting subset may not preserve the sample size independence in the tractable reformulation of the proposed model explained in the subsequent section. The method in \cite{poolla2020wasserstein} requires multiple optimization problems scaling with sample size to be solved. In contrast, no such potentially large-scale optimization problem is faced when constructing $\Omega$ in the aforementioned way. 

Problem (\ref{eq:DRUC}) can be rewritten as a two-stage RO problem for which many existing algorithms are applicable in theory \cite{esfahani2018data,zhao2018data}. For example, the following proposition holds. 
\begin{prop}\label{prop}
Problem (\ref{eq:DRUC}) is rewritten as
\begin{equation}\label{eq:exactreformulation}
\begin{aligned}
&\min_{\substack{
{\bf u}\in{\mathcal U},\left(\overline{\bf x}^{\rm g},\underline{\bf x}^{\rm g}\right)\in{\mathcal X}^{\rm e}\left({\bf u}\right),
\lambda\geq0,\eta^{s}}}
{\bf c}^\top_1{\bf u}+\lambda\varepsilon+\frac{1}{S}\sum_{s\in{\mathcal S}}\eta^{s}\\
&\text{s.t.\ }f\left(\overline{\bf x}^{\rm g},\underline{\bf x}^{\rm g},{\bf w}\right)-\lambda\onenorm{{\bf w}-\hat{\bf w}^{s}}\leq\eta^{s},\forall s\in{\mathcal S},\forall{\bf w}\in{\Omega}
\end{aligned}
\end{equation}
where ${\mathcal X}^{\rm e}\left({\bf u}\right):= \left\{\left(\overline{\bf x}^{\rm g},\underline{\bf x}^{\rm g}\right)\in{\mathcal X}^{\rm r}\left({\bf u}\right):\text{(\ref{eq:DRUCcon})}\right\}$. 
\end{prop}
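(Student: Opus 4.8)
The plan is to leave the outer minimization over $({\bf u},\overline{\bf x}^{\rm g},\underline{\bf x}^{\rm g})$ intact and to dualize only the inner worst-case expectation $\max_{{\mathbb P}\in{\mathcal P}_{\varepsilon}({\mathbb P}_{\rm e},\Omega)}{\text E}_{\mathbb P}[f(\overline{\bf x}^{\rm g},\underline{\bf x}^{\rm g},{\bf w})]$ by invoking the strong-duality result for Wasserstein distributionally robust expectations. First I would absorb the semi-infinite feasibility requirement (\ref{eq:DRUCcon}) into the feasible set, replacing ${\mathcal X}^{\rm r}({\bf u})$ with ${\mathcal X}^{\rm e}({\bf u})$, so that for every feasible triple the set ${\mathcal X}(\overline{\bf x}^{\rm g},\underline{\bf x}^{\rm g},{\bf w})$ is nonempty for all ${\bf w}\in{\mathcal W}\supseteq\Omega$; this is exactly what makes the inner problem well posed.

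Before dualizing, I would establish the regularity of the integrand. For fixed $(\overline{\bf x}^{\rm g},\underline{\bf x}^{\rm g})$, the map ${\bf w}\mapsto f(\overline{\bf x}^{\rm g},\underline{\bf x}^{\rm g},{\bf w})$ is the optimal value of the linear program (\ref{eq:f}) in which ${\bf w}$ enters only the right-hand sides of (\ref{eq:DUCcon12})--(\ref{eq:DUCcon14}); hence by LP duality it is a pointwise maximum of affine functions of ${\bf w}$, i.e. convex, and it is finite on all of ${\mathcal W}$ because the inner LP is feasible by (\ref{eq:DRUCcon}) and bounded (its feasible set being compact). Since $\Omega={\mathcal W}\cap[\underline{\bf w}^{\rm a},\overline{\bf w}^{\rm a}]$ is a closed, bounded, convex box, $f$ is therefore continuous and bounded, hence upper semicontinuous, on $\Omega$, and the worst-case expectation is finite. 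These are precisely the hypotheses under which the strong-duality identity of Theorem~4.2 in \cite{esfahani2018data} holds, yielding
\[
\max_{{\mathbb P}\in{\mathcal P}_{\varepsilon}({\mathbb P}_{\rm e},\Omega)}{\text E}_{\mathbb P}\big[f(\overline{\bf x}^{\rm g},\underline{\bf x}^{\rm g},{\bf w})\big]
=\inf_{\lambda\geq0}\Big\{\lambda\varepsilon+\frac{1}{S}\sum_{s\in{\mathcal S}}\sup_{{\bf w}\in\Omega}\big[f(\overline{\bf x}^{\rm g},\underline{\bf x}^{\rm g},{\bf w})-\lambda\onenorm{{\bf w}-\hat{\bf w}^{s}}\big]\Big\}.
\]

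Next I would epigraphically lift each inner supremum: introducing $\eta^{s}\geq\sup_{{\bf w}\in\Omega}[f(\overline{\bf x}^{\rm g},\underline{\bf x}^{\rm g},{\bf w})-\lambda\onenorm{{\bf w}-\hat{\bf w}^{s}}]$ is equivalent to imposing the family of constraints $f(\overline{\bf x}^{\rm g},\underline{\bf x}^{\rm g},{\bf w})-\lambda\onenorm{{\bf w}-\hat{\bf w}^{s}}\leq\eta^{s}$ for all ${\bf w}\in\Omega$ and all $s\in{\mathcal S}$, turning the objective into $\lambda\varepsilon+\frac{1}{S}\sum_{s\in{\mathcal S}}\eta^{s}$. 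Because the resulting problem over $(\lambda,\eta^{s})$ is again a minimization, it merges with the outer minimization over $({\bf u},\overline{\bf x}^{\rm g},\underline{\bf x}^{\rm g})\in{\mathcal X}^{\rm e}({\bf u})$ into the single program (\ref{eq:exactreformulation}), with ${\bf c}_1^\top{\bf u}$ carried through unchanged.

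The main obstacle I anticipate is the justification of the strong-duality identity rather than the bookkeeping: one must confirm that the chosen $1$-Wasserstein metric, the compact convex support $\Omega$, and the upper semicontinuity and finiteness of $f$ together meet the exact assumptions of \cite{esfahani2018data}, so that the duality gap is zero and the infimum over $\lambda$ is attained and may be merged with the outer minimization. A secondary subtlety is that $f$ is convex (not concave) in ${\bf w}$, so each inner supremum is a nonconcave maximization; this does not affect the exactness of (\ref{eq:exactreformulation}), but it signals that computational tractability must be recovered separately from the affine-policy structure, which is the concern of the subsequent section.
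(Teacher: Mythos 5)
Your proposal is correct and follows essentially the same route as the paper, which proves Proposition~\ref{prop} by invoking the standard strong-duality result for Wasserstein distributionally robust expectations (the paper cites the techniques of \cite{gao2016distributionally}; you cite the equivalent result in \cite{esfahani2018data}) and then applying the epigraph reformulation of the per-sample suprema before merging with the outer minimization over ${\mathcal X}^{\rm e}\left({\bf u}\right)$. Your verification of the regularity hypotheses (convexity and finiteness of ${\bf w}\mapsto f\left(\overline{\bf x}^{\rm g},\underline{\bf x}^{\rm g},{\bf w}\right)$ via LP duality on the compact convex support $\Omega$) is exactly the bookkeeping needed to make that citation rigorous.
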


It can be shown using the well-known WDRO techniques in \cite{gao2016distributionally}.\footnote{The proof is provided in Appendix I of the extended version \cite{cho2022affine}.} Unfortunately, the resulting problem may still be hard to tackle in practice, partly because its scale increases with the sample size S, as can be seen from (\ref{eq:exactreformulation}). In the following section, the proposed model as an approximation to (\ref{eq:DRUC}) is presented, which has a tractable exact reformulation with a scale invariant with $S$.

Meanwhile, the quality of a solution to (\ref{eq:DRUC}) depends on $\varepsilon$. For a general class of Wasserstein DRO problems including  (\ref{eq:DRUC}), it is shown in the literature that a careful tuning of $\varepsilon$ leads to some desirable performance guarantees \cite{esfahani2018data}. However, it is also reported that a Wasserstein ball constructed in such a way might be overly conservative. To obtain a less conservative solution, $\varepsilon$ can be set in a data-driven manner, e.g., by the holdout method or cross-validation \cite{esfahani2018data}.

\section{WDRUC with Affine Policy}\label{sec:WDRUCwith}
In this section, first, the proposed WDRUC model is formulated. Subsequently, its reformulation with a scale independent of sample size is provided as a solution method. 

\subsection{WDRUC with Affine Policy}
The proposed model applies an affine policy to the exact WDRUC model (\ref{eq:DRUC}), by which each decision variable of (\ref{eq:f}) is modeled as an affine function of the total REG forecast error in the corresponding time period \cite{lorca2016multistage}. Specifically, the conventional generation, load shedding, and REG curtailment at bus $i$ in time period $t$ are defined by the affine policy approach as 
$x^{\rm ga}_{it}({\bf w}_t) := a^{\rm g1}_{it} \sum_{i^\prime\in{\mathcal I}}w_{i^\prime t} + a^{\rm g0}_{it}$, $x^{\rm la}_{it}({\bf w}_t) := a^{\rm l1}_{it}\sum_{i^\prime\in{\mathcal I}}w_{i^\prime t} + a^{\rm l0}_{it}$, and $x^{\rm ra}_{it}({\bf w}_t) := a^{\rm r1}_{it}\sum_{i^\prime\in{\mathcal I}}w_{i^\prime t}
+ a^{\rm r0}_{it}$, respectively. Here, real numbers $a^{\rm g1}_{it}$, $a^{\rm g0}_{it}$, $a^{\rm l1}_{it}$, $a^{\rm l0}_{it}$, $a^{\rm r1}_{it}$, and $a^{\rm r0}_{it}$ are determined with $\bf u$, $\overline{\bf x}^{\rm g}$, and $\underline{\bf x}^{\rm g}$. Let ${\bf a}^1$ and ${\bf a}^0$ denote the vector of $a^{\rm g1}_{it}$, $a^{\rm l1}_{it}$, and $a^{\rm r1}_{it}$, and that of $a^{\rm g0}_{it}$, $a^{\rm l0}_{it}$, and $a^{\rm r0}_{it}$ for all $i\in{\mathcal I}$ and $t\in{\mathcal T}$, respectively. Let also ${\bf a}^1_t$ and ${\bf a}^0_t$ for each $t\in{\mathcal T}$ denote the vector of $a^{\rm g1}_{it}$, $a^{\rm l1}_{it}$, and $a^{\rm r1}_{it}$, and that of $a^{\rm g0}_{it}$, $a^{\rm l0}_{it}$, and $a^{\rm r0}_{it}$ for all $i\in{\mathcal I}$, respectively. 

As the worst-case expected total operating cost is minimized over ${\mathcal P}_{\varepsilon}\left({\mathbb P}_{\rm e},{\Omega}\right)$ in (\ref{eq:DRUC}), the affine policy is applied only for $\Omega$. The proposed model is formulated as follows: 
\begin{subequations}\label{eq:ADRUC}
\begin{align}
&\min_{\substack{{\bf u}\in{\mathcal U},\left(\overline{\bf x}^{\rm g},\underline{\bf x}^{\rm g}\right)\in{\mathcal X}^{\rm e}\left({\bf u}\right),
{\bf a}^1,{\bf a}^0}}
{\bf c}^\top_1{\bf u}+g\left({\bf a}^1,{\bf a}^0\right)\label{eq:ADRUCobj}\\
&\text{s.t.}\quad\left({\bf a}^1,{\bf a}^0\right)\in{\mathcal A}\left(\overline{\bf x}^{\rm g},\underline{\bf x}^{\rm g},{\bf w}\right),\forall{\bf w}\in{\Omega}.\label{eq:ADRUCcon1}
\end{align}
\end{subequations}
In (\ref{eq:ADRUCobj}), $g\left({\bf a}^1,{\bf a}^0\right)$ denotes the worst-case expected operating cost over ${\mathcal P}_{\varepsilon}\left({\mathbb P}_{\rm e},{\Omega}\right)$ incurred by the affine policy, i.e., 
\begin{equation}\label{eq:g}
\begin{aligned}
&g\left({\bf a}^1,{\bf a}^0\right):=\\
&\quad\max_{{\mathbb P}\in{\mathcal P}_{\varepsilon}\left({\mathbb P}_{\rm e},{\Omega}\right)}{\text E}_{\mathbb P}
\Big[\sum_{t\in{\mathcal T}} \Big( c^1_t \left({\bf a}^1_t\right)\sum_{i\in{\mathcal I}}w_{it} + c^0_t\left({\bf a}^0_t\Big)\right)\Big]
\end{aligned}
\end{equation}
where $c^1_t$ and $c^0_t$ for each $t\in{\mathcal T}$ are linear functions whose coefficients are defined using the entries of ${\bf c}_2$. In (\ref{eq:ADRUCcon1}), ${\mathcal A}\left(\overline{\bf x}^{\rm g},\underline{\bf x}^{\rm g},{\bf w}\right)$ is obtained by substituting the affine functions for ${\bf x}$ in the constraints of ${\mathcal X}\left(\overline{\bf x}^{\rm g},\underline{\bf x}^{\rm g},{\bf w}\right)$. 
Meanwhile, the affine policy is discarded in the actual dispatch because of its suboptimality. In the actual dispatch, the subproblems of (\ref{eq:f}) are solved, which are always feasible due to (\ref{eq:DRUCcon}).

Similar to the exact WDRUC model (\ref{eq:DRUC}), the proposed model (\ref{eq:ADRUC}) is not solvable in the current form. The tractable reformulation of (\ref{eq:ADRUC}) is given in the following subsection.

\subsection{Tractable Reformulation}
As the most important contribution of this paper, this subsection provides the tractable exact reformulation of (\ref{eq:ADRUC}). The reformulation is based on the following theorem:
\begin{theorem}\label{theorem2}
For each $t\in{\mathcal T}$, let $\overline{z}^+_t := \sum_{i\in{\mathcal I}}\sum_{s\in{\mathcal S}} \overline{v}^s_{it}$ and $\overline{z}^-_t := -\sum_{i\in{\mathcal I}}\sum_{s\in{\mathcal S}} \underline{v}^s_{it}$ where $\underline{v}^s_{it}:=\max\left\{\underline{w}_{it},\underline{w}^{\rm a}_{it}\right\} - \hat{w}^s_{it}$ and $\overline{v}^s_{it}:=\min\left\{\overline{w}_{it},\overline{w}^{\rm a}_{it}\right\} - \hat{w}^{s}_{it}$. It holds that 
\[
g\left({\bf a}^1,{\bf a}^0\right) = g^{\rm c}\left({\bf a}^1,{\bf a}^0\right)+g^{\rm v}\left({\bf a}^1\right)\]
where
\[
g^{\rm c}\left({\bf a}^1,{\bf a}^0\right): =\sum_{t\in{\mathcal T}}
\Big(c^{1}_t\left({\bf a}^1_t\right) \sum_{i\in{\mathcal I}}\sum_{s\in{\mathcal S}}\frac{1}{S}\hat{w}^{s}_{it} + c^{0}_t\left({\bf a}^0_t\right)\Big) 
\]
and $g^{\rm v}\left({\bf a}^1\right)$ denotes the optimal value of the LP problem
\begin{equation}\label{eq:h}
\begin{aligned}
&\max_{{z^+_{t},z^-_{t}}\geq0}&&\frac{1}{S}\sum_{t\in{\mathcal T}}c^{1}_t\left({\bf a}^1_t\right)\left(z^+_{t} - z^-_{t}\right)\\
&\text{s.t.}&&\frac{1}{S}\sum_{t\in{\mathcal T}}
\left(z^+_{t} + z^-_{t}\right)\leq \varepsilon,\\
&& &z^+_{t}\leq\overline{z}^+_t,\quad z^-_{t} \leq\overline{z}^-_{t},\quad\forall t\in{\mathcal T}.
\end{aligned}
\end{equation}
\end{theorem}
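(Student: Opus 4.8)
The plan is to evaluate the worst-case expectation in (\ref{eq:g}) directly through its transport (primal) characterization, exploiting that the integrand is affine in $\bf w$ and that $\Omega$ is an axis-aligned box. Writing the integrand as $\ell(\mathbf{w}):=\sum_{t\in{\mathcal T}} c^0_t(\mathbf{a}^0_t) + \sum_{t\in{\mathcal T}} c^1_t(\mathbf{a}^1_t)\sum_{i\in{\mathcal I}} w_{it}$, I would first observe that $\Omega={\mathcal W}\cap[\underline{\bf w}^{\rm a},\overline{\bf w}^{\rm a}]$ is the product box $\prod_{i,t}[\max\{\underline{w}_{it},\underline{w}^{\rm a}_{it}\},\min\{\overline{w}_{it},\overline{w}^{\rm a}_{it}\}]$, so that the admissible deviation of each coordinate from $\hat w^s_{it}$ is exactly $[\underline v^s_{it},\overline v^s_{it}]$ with $\underline v^s_{it}\le 0\le\overline v^s_{it}$ (each sample lies in $\Omega$). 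Since ${\mathbb P}_{\rm e}$ is supported on the $S$ atoms $\hat{\bf w}^s$, any ${\mathbb P}\in{\mathcal P}_{\varepsilon}({\mathbb P}_{\rm e},\Omega)$ is the push-forward of a coupling described by conditional measures ${\mathbb Q}^s$ on $\Omega$, so (\ref{eq:g}) becomes $\max\frac1S\sum_{s\in{\mathcal S}}{\text E}_{{\mathbb Q}^s}[\ell]$ subject to $\frac1S\sum_{s\in{\mathcal S}}{\text E}_{{\mathbb Q}^s}[\onenorm{\mathbf{w}-\hat{\bf w}^s}]\le\varepsilon$.

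The first key step is to reduce this infinite-dimensional problem over measures to a finite one over deterministic perturbations. Because $\ell$ is affine, ${\text E}_{{\mathbb Q}^s}[\ell]$ depends on ${\mathbb Q}^s$ only through its mean ${\bf m}^s:={\text E}_{{\mathbb Q}^s}[{\bf w}]$, which lies in $\Omega$ by convexity; and by Jensen's inequality the transport cost satisfies ${\text E}_{{\mathbb Q}^s}[\onenorm{\mathbf{w}-\hat{\bf w}^s}]\ge\onenorm{{\bf m}^s-\hat{\bf w}^s}$, with equality for the Dirac mass $\delta_{{\bf m}^s}$. Hence the optimum is attained at point masses, and writing $\delta^s_{it}:=m^s_{it}-\hat w^s_{it}\in[\underline v^s_{it},\overline v^s_{it}]$ and substituting $\ell({\bf m}^s)=\ell(\hat{\bf w}^s)+\sum_{t\in{\mathcal T}} c^1_t(\mathbf{a}^1_t)\sum_{i\in{\mathcal I}}\delta^s_{it}$ splits the value into the sample average $\frac1S\sum_{s\in{\mathcal S}}\ell(\hat{\bf w}^s)$, which is exactly $g^{\rm c}(\mathbf{a}^1,\mathbf{a}^0)$, plus the optimal value of
\[
\max_{\delta^s_{it}\in[\underline v^s_{it},\overline v^s_{it}]}\ \frac1S\sum_{t\in{\mathcal T}} c^1_t(\mathbf{a}^1_t)\sum_{s\in{\mathcal S}}\sum_{i\in{\mathcal I}}\delta^s_{it}\quad\text{s.t.}\quad \frac1S\sum_{s\in{\mathcal S}}\sum_{i\in{\mathcal I}}\sum_{t\in{\mathcal T}}\onenorm{\delta^s_{it}}\le\varepsilon.
\]

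The second key step is to collapse this perturbation problem onto the aggregate variables of (\ref{eq:h}) via $z^+_t=\sum_{s,i}(\delta^s_{it})^+$ and $z^-_t=\sum_{s,i}(\delta^s_{it})^-$. For the ``$\le$'' direction I would check that any feasible $\{\delta^s_{it}\}$ induces a feasible $(z^+_t,z^-_t)$ with identical objective, using $(\delta^s_{it})^+\le\overline v^s_{it}$ and $(\delta^s_{it})^-\le-\underline v^s_{it}$ to obtain $z^+_t\le\overline z^+_t$, $z^-_t\le\overline z^-_t$, together with $\sum_{s,i}\onenorm{\delta^s_{it}}=z^+_t+z^-_t$. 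For the reverse direction I would, given any feasible $(z^+_t,z^-_t)$, realize the net deviation $z^+_t-z^-_t$ in each period $t$ by loading only coordinates of one sign---pushing coordinates up to $\overline v^s_{it}$ when $z^+_t\ge z^-_t$ and down to $\underline v^s_{it}$ otherwise---which is admissible because the required one-signed net deviation never exceeds the corresponding capacity $\overline z^+_t$ or $\overline z^-_t$, and whose spent budget $\frac1S\sum_{t}\onenorm{z^+_t-z^-_t}\le\frac1S\sum_{t}(z^+_t+z^-_t)\le\varepsilon$ stays feasible. This produces a $\delta$ with the same objective, giving ``$\ge$''. Equality of the two optimal values then identifies $g^{\rm v}(\mathbf{a}^1)$ with the value of (\ref{eq:h}), and the theorem follows.

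I expect the main obstacle to be the reverse direction of the second step: the aggregate variables $z^\pm_t$ are coupled across samples and buses by the single budget, and one must reconstruct a coordinatewise-feasible perturbation reproducing both the net value and respecting the budget. The clean resolution is that an optimal solution of (\ref{eq:h}) never activates $z^+_t$ and $z^-_t$ simultaneously (the objective $c^1_t(z^+_t-z^-_t)$ favors a single sign per period), so it suffices to realize a one-signed deviation of total magnitude $\onenorm{z^+_t-z^-_t}$ within the box capacities $\overline z^\pm_t$. The Dirac reduction of the first step is routine once Jensen's inequality and convexity of $\Omega$ are invoked.
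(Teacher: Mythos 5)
Your proof is correct and shares the paper's overall skeleton (reduce the worst-case expectation to a finite perturbation problem over box-constrained deviations with a single $1$-norm budget, split off the sample-average term $g^{\rm c}$, then aggregate the per-coordinate deviations into $z^\pm_t$), but it reaches the finite reduction by a genuinely different route. The paper simply invokes Theorem 4.4 of the Esfahani--Kuhn duality machinery to pass from the supremum over the Wasserstein ball to the deterministic perturbation problem; you instead argue in the primal, disintegrating any feasible coupling into conditional measures ${\mathbb Q}^s$, noting that an affine integrand depends on ${\mathbb Q}^s$ only through its mean (which stays in the convex box $\Omega$), and using Jensen's inequality on the transport cost to conclude that Dirac masses are optimal. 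This buys a self-contained, elementary derivation that makes clear the result hinges only on affineness of the loss and the box structure of $\Omega$, at the price of having to verify the disintegration/attainment details that the cited theorem packages up. Your second step is also more careful than the paper's: where the paper asserts the aggregation to $z^\pm_t$ is ``without loss of optimality'' because all $v^{s\pm}_{it}$ at a fixed $t$ share the same objective and budget coefficients, you prove both inequalities explicitly, and your reverse-direction construction (realizing only the net deviation $z^+_t-z^-_t$ with one-signed loads, which respects the capacities $\overline z^\pm_t$ and spends no more budget than $z^+_t+z^-_t$) is exactly the right way to make that claim rigorous --- note that, as you observe, it works for every feasible $(z^+,z^-)$, so the remark about optimal solutions not activating both signs is not actually needed. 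The one hypothesis you should state explicitly is that each sample satisfies $\hat{\bf w}^s\in\Omega$ (equivalently $\underline v^s_{it}\le 0\le\overline v^s_{it}$), which follows from $\hat{\bf w}^s\in{\mathcal W}$ and the definition of $\underline{\bf w}^{\rm a},\overline{\bf w}^{\rm a}$, and which both your feasibility claims and the well-posedness of the Wasserstein ball on $\Omega$ quietly rely on.
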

\begin{proof}
According to Theorem 4.4 in \cite{esfahani2018data}, (\ref{eq:g}) is rewritten as 
\[
\begin{aligned}
&\max_{v^{s}_{it}}
&&
\frac{1}{S}\sum_{s\in{\mathcal S}}\sum_{t\in{\mathcal T}}\Big(c^{1}_t\left({\bf a}^1_t\right)\sum_{i\in{\mathcal I}}\left(\hat{w}^{s}_{it} + v^{s}_{it}\right)+c^{0}_t\left({\bf a}^0_t\right)\Big)\\
&\text{s.t}&&
\frac{1}{S}\sum_{i\in{\mathcal I}}\sum_{s\in{\mathcal S}}\sum_{t\in{\mathcal T}}
\onenorm{v^{s}_{it}} \leq \varepsilon,\\
&&&\underline{v}^{\rm s}_{it} \leq v^{s}_{it} \leq \overline{v}^s_{it},\quad \forall i\in{\mathcal I},\forall s\in{\mathcal S},\forall t\in{\mathcal T}.
\end{aligned}
\]
Thus, it is enough to show that the following problem is equivalent to (\ref{eq:h}): 
\begin{equation}\label{eq:h2}
\begin{aligned}
&\max_{v^{s}_{it}}&&
\frac{1}{S}\sum_{i\in{\mathcal I}}\sum_{s\in{\mathcal S}}\sum_{t\in{\mathcal T}}c^{1}_t\left({\bf a}^1_t\right)v^{s}_{it}\\
&\text{s.t}&&
\frac{1}{S}\sum_{i\in{\mathcal I}}\sum_{s\in{\mathcal S}}\sum_{t\in{\mathcal T}}\onenorm{v^{s}_{it}} \leq \varepsilon,\\
&&&\underline{v}^{\rm s}_{it}\leq v^{s}_{it} \leq \overline{v}^s_{it},\quad \forall i\in{\mathcal I},\forall s\in{\mathcal S},\forall t\in{\mathcal T}.
\end{aligned}
\end{equation}
By introducing auxiliary variables to linearize the norm constraint, (\ref{eq:h2}) is rewritten as follows: 
\begin{equation}\label{eq:prototype2}
\begin{aligned}
&\max_{v^{s+}_{it},v^{s-}_{it}\geq0}&&
\frac{1}{S}\sum_{i\in{\mathcal I}}\sum_{s\in{\mathcal S}}\sum_{t\in{\mathcal T}}c^{1}_t\left({\bf a}^1_t\right)\left(v^{s+}_{it} - v^{s-}_{it}\right)\\
&\text{s.t.}&&
\frac{1}{S}\sum_{i\in{\mathcal I}}\sum_{s\in{\mathcal S}}\sum_{t\in{\mathcal T}}\left(v^{s+}_{it}+v^{s-}_{it}\right) \leq \varepsilon,\\
&&&v^{s+}_{it} \leq \overline{v}^{s}_{it},\quad\forall i\in{\mathcal I},\forall s\in{\mathcal S},\forall t\in{\mathcal T},\\
&&&v^{s-}_{it} \leq -\underline{v}^{s}_{it},\quad \forall i\in{\mathcal I},\forall s\in{\mathcal S},\forall t\in{\mathcal T}.
\end{aligned}
\end{equation}
In (\ref{eq:prototype2}), $v^{s+}_{it}$ for all $i\in{\mathcal I}$ and $s\in{\mathcal S}$ have an identical coefficient in the objective function and the first constraint, which is the only coupling constraint; the same holds for $v^{s-}_{it}$. By letting $z^+_t:=\sum_{i\in{\mathcal I}}\sum_{s\in{\mathcal S}} v^{s+}_{it}$ and $z^-_t:=\sum_{i\in{\mathcal I}}\sum_{s\in{\mathcal S}} v^{s-}_{it}$ for each $t\in{\mathcal T}$, (\ref{eq:prototype2}) is rewritten as (\ref{eq:h}) without loss of optimality.
\end{proof}

Based on duality in LP, (\ref{eq:h}) is equivalent to 
\begin{equation}\label{eq:hdual}
\begin{aligned}
&\min_{\xi,\xi^+_t,\xi^-_t\geq0}&&\varepsilon\xi + \sum_{t\in{\mathcal T}}\left(\overline{z}^+_{t}\xi^+_t + \overline{z}^-_{t}\xi^-_t\right)\\
&\text{s.t.}&&\xi + S\xi^+_t \geq c^{1}_{t}\left({\bf a}^1_t\right),\quad\forall t\in{\mathcal T},\\
&& &\xi + S\xi^-_t \geq-c^{1}_t\left({\bf a}^1_t\right),\quad \forall t\in{\mathcal T}
\end{aligned}
\end{equation}
where $\xi$, $\xi^+_t$, and $\xi^-_t$ are the dual variables associated with the first constraint, the upper limit constraint on $z^+_{t}$, and the upper limit constraint on $z^-_{t}$ in (\ref{eq:h}), respectively. Let (\ref{eq:hdual}) be compactly rewritten as 
\[
\min_{\bm\upxi\in{\Xi}({\bf a}^1)} {\bf c}^\top_3{\bm\upxi}.
\]
Thus, the proposed model (\ref{eq:ADRUC}) is expressed as 
\begin{equation}\label{eq:ADRUCdual}
\begin{aligned}
&\min_{\substack{{\bf u}\in{\mathcal U},\left(\overline{\bf x}^{\rm g},\underline{\bf x}^{\rm g}\right)\in{\mathcal X}^{\rm e}\left({\bf u}\right),\\{\bm\upxi}\in\Xi({\bf a}^1),{\bf a}^1,{\bf a}^0,}}&&{\bf c}^\top_1{\bf u}+g^{\rm c}\left({\bf a}^1,{\bf a}^0\right)+{\bf c}^\top_3{\bm\upxi}\\
&\text{s.t.}&&\text{(\ref{eq:ADRUCcon1})}.
\end{aligned}
\end{equation}

Problem (\ref{eq:ADRUCdual}) is a semi-infinite programming problem with constraints (\ref{eq:DRUCcon}) and (\ref{eq:ADRUCcon1}), without which it is a standard MILP problem. These two constraints can be addressed by well-studied RO techniques. For example, the C\&CG algorithm in \cite{zeng2013solving} and the cutting-plane algorithm in \cite{lorca2016multistage} can deal with (\ref{eq:DRUCcon}) and (\ref{eq:ADRUCcon1}), respectively. Further details of the algorithm for (\ref{eq:ADRUCdual}) based on the two existing algorithms are explained in Appendix II of the extended version \cite{cho2022affine}. 

The most distinctive feature of (\ref{eq:ADRUCdual}) is that its scale remains invariant with $S$, thereby, unlike in \cite{zheng2020data} and \cite{gamboa2021decomposition}, allowing a number of samples to be easily exploited in the Wasserstein DRO framework without using computationally intensive decomposition algorithms. Although the WDRUC model in \cite{zhu2019wasserstein} has a similar property, it ignores the non-anticipativity constraint by design and thus is impractical. On the other hand, in the Wasserstein distributionally robust control method \cite{yang2020wasserstein}, the non-anticipativity constraint is naturally satisfied but the invariance in computational complexity with respect to sample size is nontrivial to achieve in general.\footnote{A notable exception is the linear-quadratic control case \cite{kim2021distributional}, in which the computational complexity of the Riccati equation is independent of the sample size.}

The decision-making process of the proposed model (\ref{eq:ADRUC}) is similar to that of (\ref{eq:DRUC}) and (\ref{eq:RUC}). Specifically,  (\ref{eq:ADRUCdual}) is solved at the DA stage and the $T$ subproblems of (\ref{eq:f}) are sequentially solved at the $T$ RT stages. Fig. \ref{fig:DM2} summarizes the DA decisions of the three models. 

 \begin{figure}[t!]
    \centering
 \vspace{2mm}
       \includegraphics[width=8.5cm]{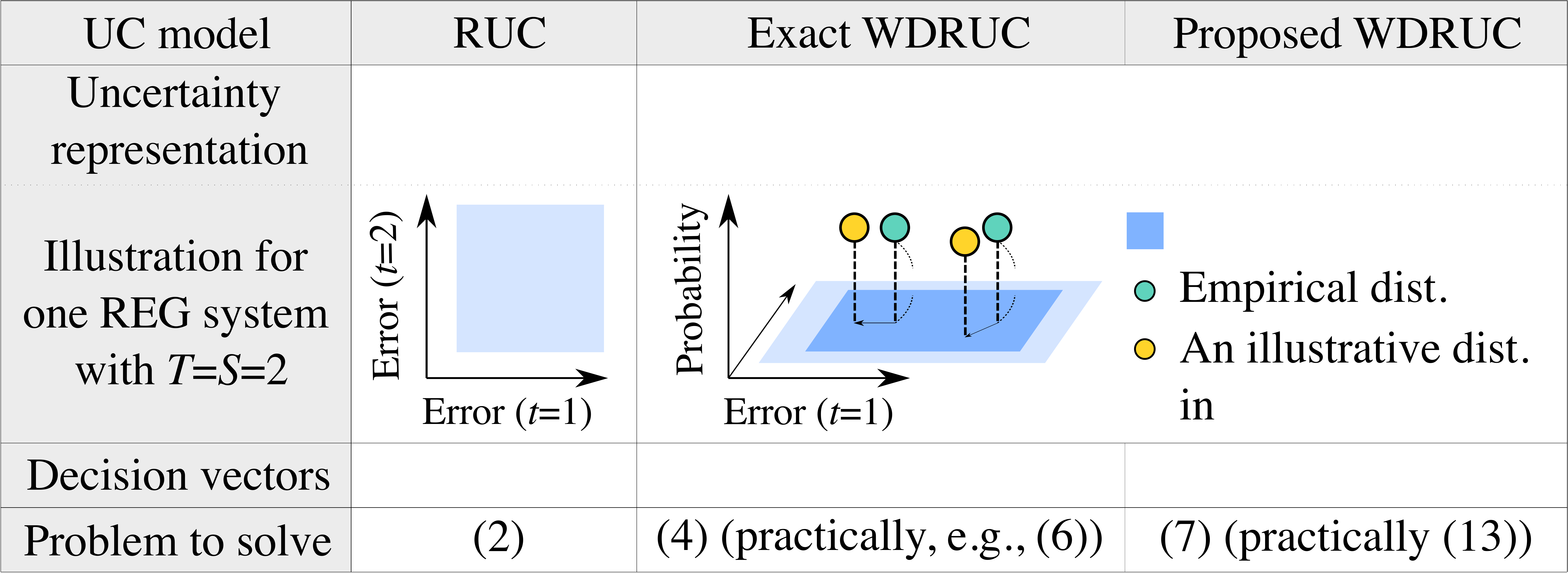}
               \put(-52.5,23.5){\footnotesize 
${\mathcal P}_{\varepsilon}\left({\mathbb P}_{\rm e},{\Omega}\right)$}
               \put(-16.5,41.7){\footnotesize 
${\mathbb P}_{\rm e}$}
               \put(-59.5,50.5){\footnotesize 
$\Omega$}
               \put(-169.5,66.5){\footnotesize 
$\mathcal W$}
               \put(-169.5,49){\footnotesize 
$\mathcal W$}
               \put(-98,66.5){\footnotesize 
$\mathcal W$, $\Omega$, ${\mathcal P}_{\varepsilon}\left({\mathbb P}_{\rm e},{\Omega}\right)$}
\put(-180.5,12.5){\footnotesize ${\bf u},\overline{\bf x}^{\rm g},\underline{\bf x}^{\rm g}$}
\put(-119.5,12.5){\footnotesize ${\bf u},\overline{\bf x}^{\rm g},\underline{\bf x}^{\rm g}$}
\put(-62.5,12.5){\footnotesize ${\bf u},\overline{\bf x}^{\rm g},\underline{\bf x}^{\rm g},{\bf a}^{\rm 1},{\bf a}^{\rm 0}$}
\put(-101.5,45){\tiny $\frac{1}{2}$}
\put(-85.5,45){\tiny $\frac{1}{2}$}
\put(-108.5,35){\tiny $\varepsilon$}
\put(-90.5,34.5){\tiny $\varepsilon$}
\vspace{-2mm}
    \caption{Day-ahead decisions of the RUC, the exact WDRUC, and the proposed models.}
    \vspace{-4mm}
\label{fig:DM2}
\end{figure}

\section{Numerical Simulations}\label{sec:simulations}
In this section, the proposed model (A-WDRUC) is compared to a UC model based on SP (SUC), the RUC model (\ref{eq:RUC}) (RUC), and the exact WDRUC model (\ref{eq:DRUC}) (E-WDRUC) on 6- and 24-bus test systems. Here, SUC is formulated as 
\[
\min_{{\bf u}\in{\mathcal U},\left(\overline{\bf x}^{\rm g},\underline{\bf x}^{\rm g}\right)\in{\mathcal X}^{\rm r}\left({\bf u}\right)}
{\bf c}^\top_1{\bf u} + \frac{1}{S}\sum_{s\in{\mathcal S}}f\left(\overline{\bf x}^{\rm g},\underline{\bf x}^{\rm g},\hat{\bf w}^s\right).
\]
The generator, load, and branch data of the 6- and the 24-bus systems are from \cite{IIT} and \cite{ordoudis2016updated}, respectively. The 6- and the 24-bus systems have one and three photovoltaic (PV) systems, respectively. The planning horizon consists of 24 one-hour time periods. For simplicity, only up to 10 preselected loads are assumed sheddable, while all the PV systems are curtailable. The marginal costs of load shedding and PV curtailment are set to the largest marginal generation cost multiplied by 100 and zero, respectively. The radius of the Wasserstein ball for A-WDRUC is chosen from 
$\left\{b\times10^{-h}:b=1,5, h=1,2,3\right\}$ by the holdout method. SUC and RUC are solved without any decomposition method to address a large number of samples and with the C\&CG algorithm, respectively. E-WDRUC is solved in the form of (\ref{eq:exactreformulation}) by the C\&CG algorithm. The simulations were run on MATLAB with CPLEX 12.10. The source code of our A-WDRUC implementation is available online.\footnote{\tt\small https://github.com/CORE-SNU/A-WDRUC}

\subsection{Comparison to SUC and RUC}
In this subsection, A-WDRUC is compared to SUC and RUC in terms of the computation time and the expected total operating cost evaluated with respect to the true distribution of the PV forecast error (or simply ``cost") for different sample sizes. First, for a given PV forecast \cite{nrel}, the true distribution of the forecast error for each PV system and time period is modeled by the normal distribution with a mean of zero and a standard deviation of 0.2 times the forecast, according to which $S$ samples are randomly generated. Subsequently, the UC models are solved. Then, the expected total operating costs are calculated for a uniform distribution of another 10,000 scenarios randomly generated according to the true distribution, which approximates the cost. For statistical robustness, this process is repeated 50 times. 

The results for the 6-bus system are illustrated in Fig. \ref{fig:sim1res_6bus}, which are averaged over all the simulation runs. Fig. \ref{fig:sim1res_6bus}a shows that the computation time of A-WDRUC does not increase with the sample size, while that of SUC does rapidly. Fig. \ref{fig:sim1res_6bus}b and \ref{fig:sim1res_6bus}c further reveal that both A-WDRUC and SUC tend to yield lower costs with more samples. However, the performance of A-WDRUC is less sensitive to changes in sample size compared to that of SUC. Although this indicates that SUC can outperform A-WDRUC when a large number of samples are available, the drastically increasing computational load hinders its usage. Meanwhile, A-WDRUC consumes more computation time but achieves lower costs on average than RUC for all $S$ except $S=2$. 
 \begin{figure}[t!]
    \centering
       \includegraphics[width=8.5cm]{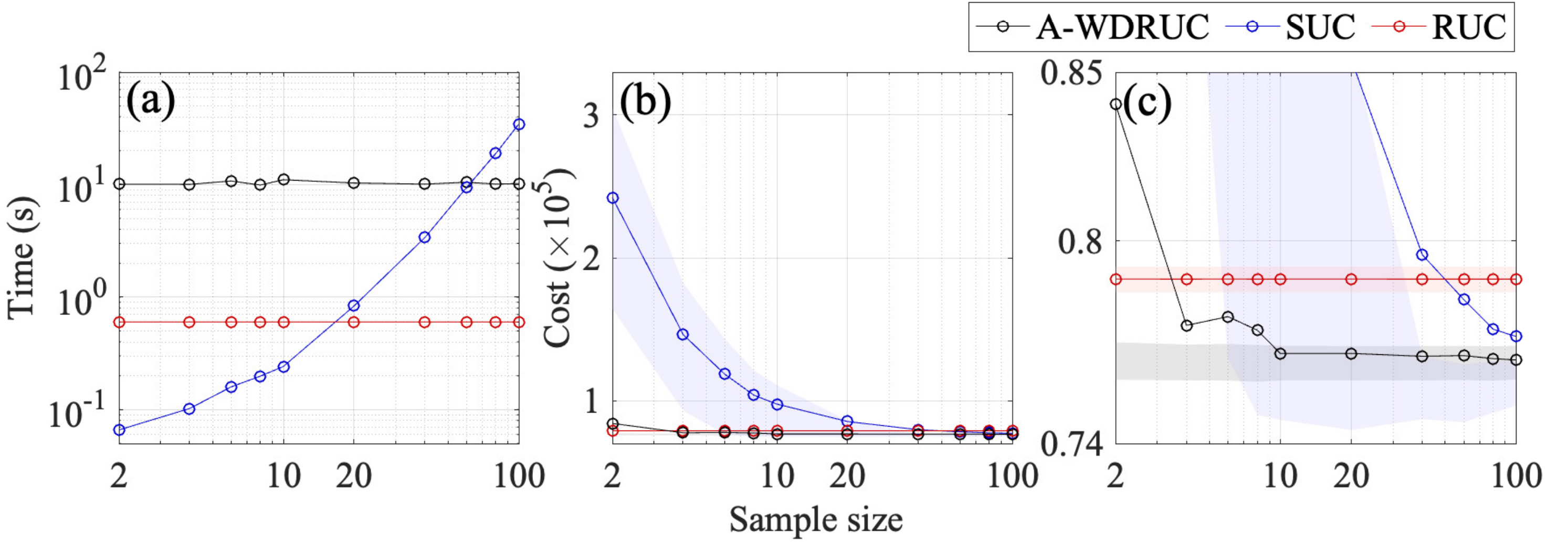}
       \vspace{-2mm}
    \caption{
    Comparison of A-WDRUC, SUC, and RUC on 
    the 6-bus system: (a) computation time, (b) cost, and (c) magnification of (b). In (b) and (c), the shaded areas represent ranges between the 25th and 75th percentiles.}
        \vspace{-1mm}
\label{fig:sim1res_6bus}
\end{figure}
Fig. \ref{fig:sim1res_24bus} depicts the averaged results for the 24-bus system, which are similar to those for the 6-bus system except that SUC is unavailable for $S\geq80$ due to memory shortage. 

 \begin{figure}[t!]
    \centering
       \includegraphics[width=8.5cm]{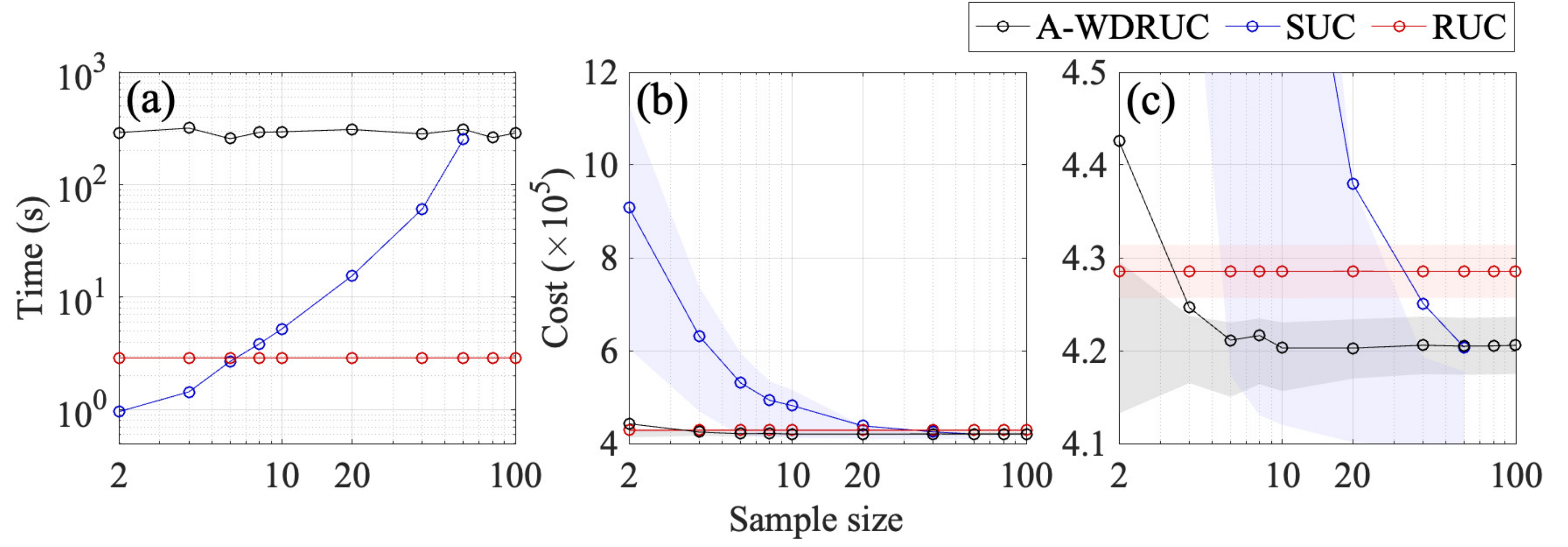}
        \vspace{-2mm}
    \caption{Comparison of A-WDRUC, SUC, and RUC on the 24-bus system: (a) computation time, (b) cost, and (c) magnification of (b).}
\label{fig:sim1res_24bus}
\end{figure}

Overall, the simulation results demonstrate that the computation time of A-WDRUC does not depend on the sample size and that A-WDRUC, on average, outperforms both SUC and RUC in terms of the cost when only a moderate number of samples are at hand. 

\subsection{Comparison to E-WDRUC}
In this subsection, A-WDRUC is further compared to E-WDRUC. Considering the anticipated huge computational load of E-WDRUC, only 10 simulation runs were conducted on the 6-bus system for each $S\leq40$. 

The averaged results are plotted in Fig. \ref{fig:sim1res_EWDRUC}, from which it can be noted that both computation time and cost of E-WDRUC vary sensitively with $S$, similar to those of SUC. However, differently from SUC, E-WDRUC is computationally more intensive than A-WDRUC even for $S=2$. Surprisingly, E-WDRUC yields notably higher costs than A-WDRUC for the smallest values of $S$, which is due to the regularization effect of the affine policy. This implies that A-WDRUC is less vulnerable to the optimizer's curse than E-WDRUC \cite{smith2006optimizer}. In summary, the simulation results indicate that A-WDRUC is more useful than E-WDRUC for the same reason that the former is more useful than SUC.
\begin{figure}[t!]
    \centering
       \includegraphics[width=8.5cm]{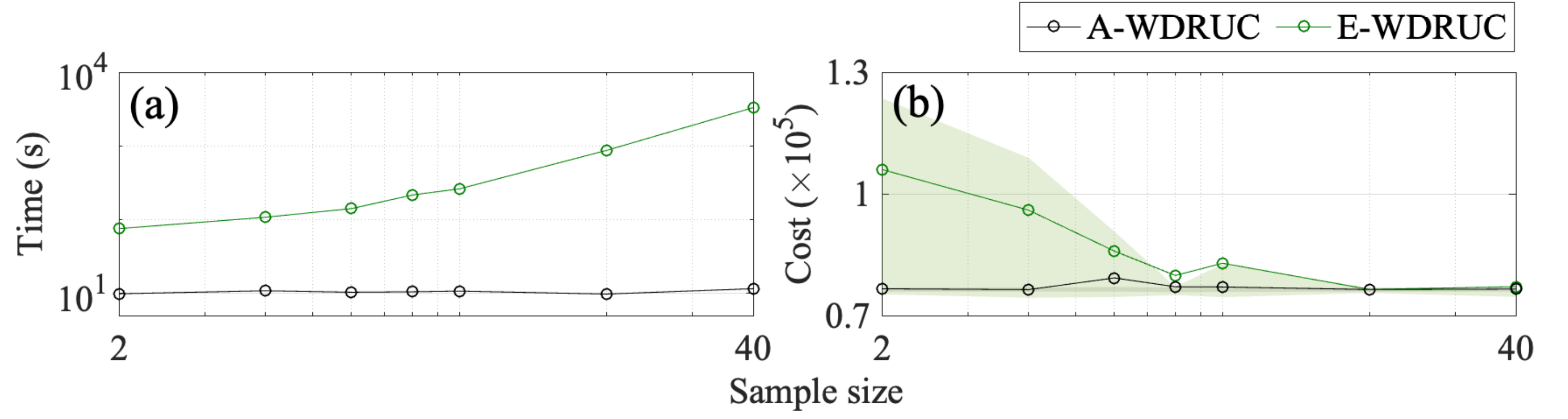}
       \vspace{-2mm}
    \caption{Comparison of A-WDRUC and E-WDRUC on the 6-bus system: (a) computation time and (b) cost.}
        \vspace{-4mm}
\label{fig:sim1res_EWDRUC}
\end{figure}

\section{Conclusions}\label{sec:conclusions}
In this paper, we have proposed a WDRUC model and its tractable exact reformulation. First, the proposed model was formulated as a WDRO problem, relying on an affine policy that satisfies the non-anticipativity constraint. Subsequently, the reformulation was given, with a scale invariant to the sample size. As a result, a number of samples can be easily exploited. Moreover, a novel technique was used to build a subset of the uncertainty set for the Wasserstein ball, further endowing the proposed model with economic efficiency. The numerical simulations verified the computational and economic effectiveness of the proposed model. 

\bibliographystyle{IEEEtran}
\bibliography{references}

\end{document}